\newtheorem{theorem}{Theorem}[section]
\newtheorem{proposition}[theorem]{Proposition}
\newtheorem{problem}[theorem]{Problem}
\newtheorem{algorithm}[theorem]{Algorithm}
\begin{document}
\pagestyle{plain}

\newcommand{\blind}{0}
\newcommand{\tit}{\Large Optimal Multilevel Matching in Clustered Observational Studies: \\ \textcolor{black}{A Case Study of the Effectiveness of Private Schools \\ Under a Large-Scale Voucher System}}

\if0\blind
{\title{\tit\thanks{For comments and suggestions, we thank three anonymous reviewers, an associate editor, and Joseph Ibrahim.  We also thank Magdalena Bennett, \textcolor{black}{Jake Bowers}, \textcolor{black}{Nicol\'{a}s Grau}, Cinar Kilcioglu, \textcolor{black}{Winston Lin}, Sam Pimentel and Paul Rosenbaum, and seminar participants at Johns Hopkins University and the University of Pennsylvania.  This work was supported by a grant from the Alfred P. Sloan Foundation.} }
\author{Jos\'{e} R. Zubizarreta\thanks{Assistant Professor, Division of Decision, Risk and Operations, and Statistics Department,  Columbia University, 3022 Broadway, 417 Uris Hall, New York, NY 10027, Email: zubizarreta@columbia.edu.}
\and Luke Keele\thanks{Associate Professor, Department of Political Science, 211 Pond Lab, Penn State University, University Park, PA 16802 Phone: 814-863-1592, Email: ljk20@psu.edu.}
}
\date{}
\maketitle
}\fi


\begin{abstract}
A distinctive feature of a clustered observational study is its multilevel or nested data structure arising from the assignment of treatment, in a non-random manner, to groups or clusters of units or individuals. Examples are ubiquitous in the health and social sciences including patients in hospitals, employees in firms, and students in schools. What is the optimal matching strategy in a clustered observational study? At first thought, one might start by matching clusters of individuals and then, within matched clusters, continue by matching individuals. But as we discuss in this paper, the optimal strategy is the opposite: in typical applications, where the intracluster correlation is not perfect, it is best to first match individuals and, once all possible combinations of matched individuals are known, then match clusters. In this paper we use dynamic and integer programming to implement this strategy and extend optimal matching methods to hierarchical and multilevel settings. Among other matched designs, our strategy can approximate a paired clustered randomized study by finding the largest sample of matched pairs of treated and control individuals within matched pairs of treated and control clusters that is balanced according to specifications given by the investigator. This strategy directly balances covariates both at the cluster and individual levels and does not require estimating the propensity score, although the propensity score can be balanced as an additional covariate. We illustrate our results with a case study of the comparative effectiveness of public versus private voucher schools in Chile, a question of intense policy debate in the country at the present.
\end{abstract}


\begin{center}
\noindent Keywords: 
{Causal Inference; Group Randomization; Hierarchical/Multilevel Data; Observational Study; Optimal Matching}
\end{center}
\clearpage
\doublespacing

\section{Introduction}

\subsection{Clustered Observational Studies with Multilevel Data}

Clustered observational studies are ubiquitous in the health and social sciences.  Examples include patients receiving similar treatments in hospitals, employees facing a policy change inside firms, and students following a particular learning program within schools.  Clustered observational studies have a nested or multilevel data structure with observed and unobserved covariates both at the cluster and unit levels. In this context, research interest typically lies in the effect of the cluster level treatment on unit level outcomes, however this effect may be confounded by differences in the distributions of covariates across the treatment groups both at the cluster and unit levels.  Therefore, an important question in clustered observational studies is: how to adjust for observed covariates taking into account the multilevel structure \textcolor{black}{of the data}?  \textcolor{black}{Ideally, these adjustments will balance covariates at the cluster and  unit levels in a transparent manner, while mimicking a target clustered randomized experiment and facilitating sensitivity analyses to hidden biases due to unobserved covariates \citep{Rosenbaum:2010}.}

\textcolor{black}{Perhaps the most well-known multilevel data structure arises in educational settings,} with unit level measures such as the student's score on a standardized test, and also cluster level covariates such as school enrollment \citep{Lee:1989}.  Both covariates may act as confounders when evaluating, for instance, the impact of a study program or administration regime targeted at improving learning. A conventional approach to adjust for cluster and unit level covariates is hierarchical or multilevel regression modeling. \textcolor{black}{Multilevel regression models can aid causal inference by accounting for the design of the data collection, and, under certain assumptions, by adjusting for unmeasured covariates and modeling variation in treatment effects \citep{feller_gelman_2015}.}

Nonparametric alternatives to multilevel regression modeling often rely on propensity scores \citep{Hong:2006,  Arpino:2011, Li:2013}. For example, \citet{Hong:2006} stratify on a multilevel propensity score to approximate a two-stage experiment where schools and students are randomly assigned to treatment within blocks. Matching methods are extensively used in observational studies \citep{stuart_2010, lu_et_al_2011}, however they do not typically account for multilevel data structures.

\textcolor{black}{
In this paper, we develop an optimal matching strategy for clustered observational studies. Contrary to intuition and common practice, which first matches clusters and then matches units within matched clusters, our strategy does the opposite: it first matches pairs of units across all possible combinations of treated and control clusters, and, once all the possible combinations of pairs of units across pairs of clusters are known, it then matches the clusters. As we discuss, our strategy is optimal in the sense that it minimizes the total sum of covariate distances between matched pairs of units (when implemented with optimal matching; \citealt{rosenbaum_1989a}), or that it maximizes size of the pair-matched sample that balances covariates as required by the investigator (when implemented with cardinality matching; \citealt{zubizarreta_et_al_2014}). With cardinality matching, the investigator can directly balance not only the first moments of the observed covariates but also their entire distributions. Cardinality matching does not require estimating the propensity score, although the propensity score can be used as an additional balancing covariate. Even though we implement our matching strategy in a clustered observational study with two levels ---students within schools--- it can readily be extended to settings with three or more levels such as students within schools within districts. In particular, we illustrate this method on a case study of the comparative effectiveness of public versus private voucher schools in standardized tests in Chile. This is a question of intense policy debate in the country at the present, and we describe it subsequently.
}
 
\subsection{Private and Public Schools Under a Voucher System in Chile}

\textcolor{black}{In Chile in the 1980's, the military dictatorship enacted a far-reaching educational reform as it implemented\textcolor{black}{, among others,} a universal voucher system. That system, which is still in force today, is based on a direct payment to each school as a function of daily attendance.  As \textcolor{black}{a} result of this reform, \textcolor{black}{a large number of private schools subsidized by the voucher emerged, and today students in Chile can} choose between three major types of schools. 
First, students can attend public schools which are run by local government authorities.  
Second, they can attend private subsidized schools. 
For both of these types of schools, the school receives a fixed amount of voucher money for each student 
\textcolor{black}{based on daily attendance.}
Private subsidized schools, however, tend to set admissions policies, design curriculum, and they can also charge additional tuition (copayment). 
These private subsidized schools are often referred to as voucher schools, even though public schools also operate under \textcolor{black}{a general} voucher system. 
Finally, there are private non-subsidized schools, which do not accept voucher payments from the government and are entirely funded by student tuition. 
\textcolor{black}{Today approximately 55\% of the students attend private subsidized schools, 37\% attend public schools, and the rest attend private non-subsidized schools \citep{mineduc2015a}}.}

\textcolor{black}{An important open question today is whether private subsidized schools deliver a better education than public schools. Following the student movement of 2011 in Chile \citep{economist2011, nytimes2011} a number of 
initiatives have been proposed (such as prohibiting the selection of students, co-financing, and eliminating profits for private subsidized schools; \citealt{bachelet2013, mineduc2015b}) which would likely result in closing many private subsidized schools 
and 
\textcolor{black}{replacing}
them with public schools. 
Current evidence on the effectiveness of private subsidized schools relative to public schools is mixed. A number of studies have found that private voucher schools increase test scores by at least 15\% to 20\% of a standard deviation \citep{Mizala:2001,Anand:2009} though other studies have found larger effects \citep{Sapelli:2002,Sapelli:2005}. 
\textcolor{black}{Other}
work has found effects that are either not statistically detectable \citep{Hsieh:2006,Mcewan:2001} or are much smaller \citep{Lara:2011}. }

\textcolor{black}{In this paper, we conduct an observational study of
\textcolor{black}{the comparative effectiveness of private subsidized or voucher schools versus public schools in terms of standardized test scores.}
\textcolor{black}{Our study seeks to understand the likely event of closure of entire private schools and replacement by public schools.}
As such, while students and their families decide to which schools to attend (within certain constraints), closing and replacing one type of school with another type of school is an intervention that happens at the school level and not at the student level.  Such reforms are not as common as interventions that seek to change one aspect of a school, but these reforms are consistent with a movement in education known as whole school or comprehensive reform.  Under whole school reform, reforms extend beyond curriculum to an overhaul of the entire school system \citep{edweek:2004}. One such example of whole school reform in the U.S. is Success for All (e.g., \citealt{borman2007}).}

\textcolor{black}{
\textcolor{black}{As we describe subsequently, our}
data have a multilevel structure in that we observe student level covariates such as gender and socio-economic status as well as school level covariates such as enrollment and whether the school is in an urban or rural area. 
Consistent with
\textcolor{black}{the hypothetical intervention as being applied to entire schools} as well as the data structure, we seek to mimic a clustered randomized experiment in which similar clusters are assigned to treatment and control, and furthermore, we also seek to remove overt bias to guarantee comparability at the student level. In this manuscript, we develop a matching strategy, consistent with both the data structure and a clustered treatment assignment mechanism, that pairs similar students within the paired schools.
}
 

\textcolor{black}{
This article is organized as follows. 
Section \ref{sec:not} presents the notation and basic assumptions, as well as the longitudinal census data and the design that we use in our study.  Section~\ref{sec:algo} reviews cardinality matching for finding the largest matched sample that is balanced, explains the multilevel matching strategy, and presents it in more generality. Section~\ref{sec:assessment} evaluates the matched sample. Section~\ref{sec:out} analyzes the comparative effectiveness of public and private voucher schools in Chile. Section~\ref{sec:conc} concludes with a summary and a discussion.
}

\section{\textcolor{black}{Notation, Data, and Study Design}}
\label{sec:not}

\subsection{Notation and Assumptions}

Here, we introduce the basic notation and assumptions. Before matching, we have a sample of treated and control students within treated and control schools. As we explain in more detail in Section \ref{sec:algo}, in our matched design we form pairs of treated and control students within pairs of treated and control schools. Following \citet{Hansen:2014}, after matching there are $K$ matched pairs of clusters, $k = 1, \dots, K$, with two schools, $j = 1, 2$, one treated and one control, so there are $2K$ units in total. Each cluster $kj$ contains $n_{kj} > 1$ individuals,  $i=1,\dots,n_{kj}$, each with a vector of observed covariates $\mathbf{x}_{kji}$. Similarly, $u_{kji}$ represents an unobserved covariate of individual unit $i$ in cluster ${kj}$. The covariates $(\mathbf{x}_{kji},u_{kji})$ may consist of measurements of either the student $kji$ or the school $kj$ of this student. In our study, we assume that treatment assignment occurs at the school level as whole schools are either private subsidized (voucher) or public schools. If the $j^{th}$ school in pair $k$ receives the treatment, we write $Z_{kj} = 1$, whereas if this school receives the control we write $Z_{kj} = 0$, so $Z_{k1} + Z_{k2} = 1$ for each $k$ as each pair contains one treated school and one control school.


Each student has two potential responses; one response that is observed under treatment $Z_{kj} = 1$ and the other observed under control $Z_{kj} = 0$ \citep{Neyman:1923a,Rubin:1974}. We denote these responses as $(y_{Tkji}, y_{Ckji})$, where $y_{Tkji}$ is observed from the $i$th subject in pair $k$ under $Z_{kj} = 1$, and $y_{Ckji}$ is observed from this subject under $Z_{kj} = 0$. In our application $y_{Tkji}$ is the test score that student $kji$ would exhibit if the student is enrolled in a private subsidized school and $y_{Ckji}$ is the test score this same student would exhibit if that school were converted to a public school. 
Of course, we do not observe both potential outcomes, but we do observe the responses $Y_{kji} = Z_{kj}y_{Tkji} + (1 - Z_{kj})y_{Ckji}$. Under this framework, the observed response $Y_{kji}$ varies with $Z_{kj}$ but the potential outcomes do not vary with treatment assignment.

Note that our notation implicitly assumes that there is no interference among units across schools. This assumption is often referred to as one part of the stable unit treatment value assumption \textcolor{black}{(SUTVA; \citealt{Rubin:1986})}. In our application, this assumption implies that potential test scores of a student in one school do not depend on the treatment assignment of students in other schools. In this context, $y_{Tkji}$ denotes the response of student $kji$ if all students in school $kj$ receive the treatment, while $y_{Ckji}$ denotes the response of student $kji$ if all students in school $kj$ receive the control. Therefore, we do not assume that we would observe the same response from student $kji$ if the treatment were assigned to some but not all of the students in school $kj$.

Our aim is to estimate the effect on test scores of attending a private voucher school as opposed to a public school for  comparable matched students $kji$ within matched schools $kj$. 
Therefore, this causal effect is defined for the population of students that are comparable in terms of their observed covariates and that are marginal in the sense that they may or not receive the treatment \citep{Rosenbaum:2012}. 
In other words, this effect is not about the students that will never be assigned to a private voucher school, or that will never be assigned to a public school, but instead about students that may be assigned to either of them as a function of their observed covariates.
As we explain in Section \ref{sec:overlap}, this estimand acknowledges the fact that not all the students are comparable in terms of observed covariates and it depends on the sample data \citep{crump2009}.
It also makes explicit that inferences to other target populations will require further modeling assumptions. 
In Section \ref{sec:overlap} we discuss how the estimand may change when the treated and control students are not comparable, and describe the population of students to which the results from the matched sample in principle generalize.

To identify this effect, we assume that the school level treatment assignment is strongly ignorable \citep{Rosenbaum:1983}.  Formally, the assumption implies that the cluster level treatment assignment is unconfounded,
\[
\Pr(Z_{kj} = 1|y_{Tkji}, y_{Ckji}, \mathbf{x}_{kji}, u_{kji}) = \Pr(Z_{kj} = 1| \mathbf{x}_{kji}),
\]
and probabilistic
\[
0 < \Pr(Z_{kj} = 1|y_{Tkji}, y_{Ckji}, \mathbf{x}_{kji}, u_{kji}) < 1,
\]
for each unit $kji$ in each cluster $kj$ (\citealt{imbens2015b}, Section 3.4).
Intuitively, this assumption implies that after matching for observed covariates there are no systematic, pretreatment differences in unobserved covariates between the treatment and control groups, and that every school has a non-zero probability of receiving treatment. In Section \ref{subsec_equivalence_sensitivity}, we assess the sensitivity of our findings to departures from the assumption that treatment assignment is unconfounded.

Our matched design is meant to replicate a paired group randomized controlled trial (RCT). In a group RCT, treatment assignment occurs at the cluster, here school, level, and all units within a cluster are assigned to receive treatment. In a paired group RCT, clusters are paired using baseline covariates prior to randomization.  However, the analogy is inexact, since we pair not only clusters but units within matched clusters. We do this to guarantee that paired clusters are balanced both in terms of covariates and sample size. Thus our design is comparable to a paired group RCT, where before randomization both clusters and units within clusters are paired. One reason to favor such a design is that randomization of clusters may not balance unit level covariates.  See \citet{Hansen:2008b} for an example where clustered allocation of treatment left imbalances in unit level covariates.  

\subsection{Longitudinal Census of Students and Schools}

In 1988, Chile introduced a national student assessment system known as the \emph{Sistema Nacional de Medici$\acute{o}$n de la Calidad de la Educaci$\acute{o}$n} or SIMCE. The SIMCE is an ``educational census.'' That is, in the SIMCE, the Chilean Ministry of Education collects data of all the students in fourth, eighth, tenth and eleventh grades to evaluate their performance in language, mathematics and sciences. The SIMCE data are collected roughly every two years from four different sources. First, data are collected from students, which includes test scores that are complemented with other student covariates such as gender. Second, both parents and teachers complete questionnaires. Finally for schools, student test scores are aggregated, and a few additional covariates are collected. Students are given unique identifiers which allows us to form a true panel over a two year period. Student records can also be linked to teacher, parent, and school level covariates. 

In our study, we use SIMCE data from years 2003, 2004 and 2006. From year 2003 we use school-level measurements of secondary-only schools before the hypothesized treatment intervention. From 2004, we use student-level data for students in primary-only public schools when they were in eighth grade. Finally, from 2006 we use student test scores on language and mathematics administered when students were in the tenth grade as our outcome measures. 

\subsection{Data Structure and Study Design}

In our study, the data structure and study design are intricately linked. We now outline how we constructed the match to fit the data structure. One advantage of our strategy is that we can tailor the statistical adjustment to fit the multilevel structure of the data, which is important since we have student, parent, teacher, and school level data. We perform two matches: one for students and one for schools. Next, we describe the covariates that form the student level match.

For each student with test scores observed in 2006, we match on student, parent, teacher, and primary school covariates from the SIMCE data collection in 2004. For the student match, we first list student level covariates. The key covariate here is student test scores from the 8th grade. In 8th grade students are tested on four topics: language, mathematics, social sciences, and natural sciences. The student level data also measures gender. For the student match, we also include three covariates from parents: income measured in six categories, father's education, and mother's education. We also link students to primary school level measures, and we match on primary school covariates in the student level match. At the primary school level, we match on a five category socio-economic status indicator for each school that is created by the Chilean Ministry of Education.  This five category indicator is constructed from questions based on parental education, family incomes in the school and an index of school vulnerability. We also use school level measures that are aggregates of data observed at other levels. As such, we match on average test scores for each primary school, the number of teachers and the number of enrolled students. Finally, in the teacher survey, teachers are asked what level of education they expect the majority of their students to achieve. Teachers responded using a five category scale that records responses from 8th grade to a college degree. We aggregate this measure and recorded the median for each primary school and use it in the student level match. 

The school match is based on secondary school data from 2003. Since the SIMCE forms a panel, we can match on characteristics of the secondary schools before any student is exposed to the treatment. That is, we match on the schools the students will attend using data from before they attend that school. For the school match, we match on enrollment, school level math and language test score averages, the percentage of female students in the school, average student income, urban versus rural status, and the same five category socio-economic status indicator for each school that is recorded for primary schools.\footnote{For student in secondary schools, the SIMCE only collects test scores on language and math.  At the primary school level, we have test scores for language, mathematics, social sciences, and natural sciences.} Some of these covariates are aggregates that we created from either student, teacher, or parent level data in 2003. We did not match on several other covariates that are also observed in the 2003 data.  These measures include whether teachers are allowed class preparation time, the proportion of teachers with a post-graduate diploma, the average teacher experience, the number of hours teachers worked per week.  We do not match on these covariates since they are plausibly part of the school level treatment.  Matching on such covariates would remove their effect on students from the final outcomes and thus could potentially attenuate the treatment effect.


We describe the matching algorithm in greater detail in Section~\ref{sec:algo}.
The match is based on integer programming which allows us to enforce different forms of balance for different covariates \citep{Zubizarreta:2012, zubizarreta_et_al_2014}. This is relevant since we tailored the constraints for each covariate. Here, we describe the different balance constraints we applied to each covariate. For the student level covariates, we applied a mean balance constraint to primary school test score measures, primary school enrollment, the number of teachers in the primary school, the average expected level of educational attainment, and the proportion of female student in the primary school. For student level test score measures, we enforced a constraint on the entire distribution via the Kolmogorov-Smirnov test statistic which is the maximum discrepancy in the empirical cumulative distribution functions. For the school level match, we enforced a mean balance constraint on secondary school test scores, missingness indicators for test scores, secondary school enrollment, income category, SES category, urban or rural status, and the proportion of female students in the secondary school.\footnote{\textcolor{black}{To be precise, we required the absolute differences in means in these covariates to differ at most by 0.1 standard deviations.  Please see Table \ref{meanbalsch} below.}}

For discrete student- and school-level covariates, we used an approximate fine balance constraint.  Under fine balance, we exactly balance covariates without exactly matching. Fine balance is achieved for discrete covariates by balancing the marginal distributions of covariates exactly in aggregate but without constraining who is matched to whom. We applied approximate fine balance to student sex, father and mother's education level, parental income categories, and primary school SES categories. See \cite{rosenbaum_et_al_2007} for a discussion of fine balance and \citet[Part II]{Rosenbaum:2010} for a discussion of different forms of covariate balance. In the following section we describe our approach to multilevel matching.

\singlespacing
\section{Dynamic and Integer Programming for Multilevel Matching}
\doublespace
\label{sec:algo}

The goal of our multilevel matching strategy is to find the largest sample of matched pairs of treated and control units within matched pairs of treated and control clusters that is balanced on the observed covariates. For assessing the sensitivity of results to the influence of unobserved covariates we use a sensitivity analysis proposed by \citet{rosenbaum_1987a, Rosenbaum:2002} and tailored to clustered treatment assignments by \citet{Hansen:2014} (see Section \ref{subsec_equivalence_sensitivity} of the paper). In our case study, units are students and clusters are schools, and, importantly, because results can be confounded both by student and school level covariates, we match pairs of students and schools to balance covariates at both levels. The basic tool that we use in our multilevel matching strategy is cardinality matching which we describe subsequently.

\subsection{Review of Cardinality Matching}

Common matching methods attempt to achieve covariate balance indirectly, by finding treated and control units that are close on a summary measure of the covariates such as the Mahalanobis distance or the propensity score (see \citealt{stuart_2010} and \citealt{lu_et_al_2011} for reviews). Unlike these matching methods, cardinality matching uses the original covariates to match units and directly balance their covariate distributions \citep{zubizarreta_et_al_2014}. Specifically, by solving an integer programming problem, cardinality matching finds the \emph{largest} matched sample that satisfies the investigator's specifications for covariate balance.
\textcolor{black}{Following \citet{Zubizarreta:2012}, these specifications for covariate balance may not only require balancing the means of the covariates, but perhaps also balancing entire distributions via fine balance \citep{rosenbaum_et_al_2007}, $x$-fine balance \citep{zubizarreta_et_al_2011}, and strength-$k$ \textcolor{black}{balance} \citep{hsu_et_al_2015}.}
For example, cardinality matching will find the largest sample of matched pairs in which all the covariates have differences in means smaller than one tenth of a standard deviation and the marginal distributions of nominal covariates of greater prognostic importance are perfectly balanced (fine balance).
In this manner, with cardinality matching subject matter knowledge about the research question at hand comes into the matching problem through the specifications for covariate balance, finding the largest matched sample that satisfies them.

\textcolor{black}{Cardinality matching is well suited for studying unit-level interventions but it is unclear how to use it for cluster-level interventions with any optimality guarantees (this same consideration applies to other matching methods such as optimal matching).} As we describe in Section \ref{subsecproc}, our multilevel matching strategy uses cardinality matching to match treated and control students across all the possible combinations of treated and control schools, and then uses a modified version of cardinality matching to match schools with the largest number of matched students. \textcolor{black}{As we discuss, our multilevel matching strategy can be used with other matching methods such as optimal matching to minimize the total sum of covariate distances between matched students within matched schools.}

\subsection{A Multistage Decision Strategy for Multilevel Matching}	\label{subsecproc}

Let $k_t \in \mathcal{K}_t = \{ 1, ..., K_t \}$ index the treated clusters and $k_c \in \mathcal{K}_c = \{ 1, ..., K_c \}$ denote the control clusters.
Let $i_{k_t}$ be treated unit $i$ in treated cluster $k_t$, with $i_{k_t} \in \mathcal{I}_{k_t} = \{ 1, ..., I_{k_t} \}$, and $i_{k_c}$ stand for control unit $i$ in control cluster $k_c$ with $i_{k_c} \in \mathcal{I}_{k_c} = \{ 1, ..., I_{k_c} \}$.
Put $\mathbf{x}_{k_t}$ for the vector of observed covariates of treated cluster $k_t$, and similarly write $\mathbf{x}_{i_{k_t}}$ for the observed covariates of treated unit $i_{k_t}$; analogous notation applies for control clusters and units.
Based on the unit-level covariates, calculate a distance $\delta_{i_{k_t}, i_{k_c}}$ between treated unit $i_{k_t}$ and control unit $i_{k_c}$ (for instance, this distance may be the robust Mahalanobis distance specified in Section 8.3 of \citealt{Rosenbaum:2010}).
Define $\mathcal{A}$ and $\mathcal{B}_{\boldsymbol{a}}$ as the sets of feasible solutions for the cluster- and unit-level matches within matched clusters (hence the subindex $\boldsymbol{a}$ in $\mathcal{B}_{\boldsymbol{a}}$).
In practice, $\mathcal{A}$ and $\mathcal{B}_{\boldsymbol{a}}$ are implemented as linear inequality constraints in an integer program and they enforce the investigator's requirements for covariate balance and matching structures at the cluster and unit levels respectively (for instance, $\mathcal{A}$ may require the means of the cluster covariates to be balanced and the matched groups to form pairs of clusters, and $\mathcal{B}_{\boldsymbol{a}}$ may require the marginal distributions of the unit covariates to be balanced and the matched groups to form pairs of units).
Importantly, since the requirements in $\mathcal{A}$ refer to clusters and those in $\mathcal{B}_{\boldsymbol{a}}$ refer to units, $\mathcal{A}$ and $\mathcal{B}_{\boldsymbol{a}}$ are disjoint.
Let $\mathcal{I}_{k_t}^{(m)}$ be the set of treated units matched in treated cluster $k_t$ and $\mathcal{I}_t^{(m)} = \bigcup_{k_t \in \mathcal{K}_t} \mathcal{I}_{k_t}^{(m)}$ be the set of treated units matched across all treated clusters.
Finally, let $\mathcal{K}_t^{(m)}$ be the set of matched treated clusters.

Building upon the framework of \citet{rosenbaum_2012a}, an optimal cardinality matching of units within clusters can be characterized by the quadruple $(\mathcal{K}_t^{(m)}, \alpha, \mathcal{I}_t^{(m)}, \beta)$ of assignments of clusters $\alpha: \mathcal{K}_t^{(m)} \to \mathcal{K}_c$ and units $\beta: \mathcal{I}_{k_t}^{(m)} \to \mathcal{I}_{k_c}$ that maximize the cardinality of the set of matched of units within matched clusters subject to the constraints in $\mathcal{A}$ and $\mathcal{B}_{\boldsymbol{a}}$, respectively.
If there are two cardinality matchings that satisfy the requirements in $\mathcal{A}$ and $\mathcal{B}_{\boldsymbol{a}}$, then we prefer one matching over the other if it has a larger cardinality, or, alternatively, if they both have the same cardinality, if it has a smaller sum of total distances between matched units.
Formally, we prefer the cardinality matching $(\mathcal{K}_t^{(m)}, \alpha, \mathcal{I}_t^{(m)}, \beta)$ to $(\tilde{\mathcal{K}}_t^{(m)}, \tilde{\alpha}, \tilde{\mathcal{I}}_t^{(m)}, \tilde{\beta})$, denoted by $(\mathcal{K}_t^{(m)}, \alpha, \mathcal{I}_t^{(m)}, \beta) \succ (\tilde{\mathcal{K}}_t^{(m)}, \tilde{\alpha}, \tilde{\mathcal{I}}_t^{(m)}, \tilde{\beta})$, if $| \mathcal{I}_t^{(m)} | > | \tilde{\mathcal{I}}_t^{(m)} |$, or alternatively if $|\mathcal{I}_t^{(m)}| = | \tilde{\mathcal{I}}_t^{(m)}|$ and $\sum_{i_{k_t} \in \mathcal{I}_t^{(m)}} \delta_{i_{k_t}, \beta(i_{k_t})} < \sum_{i_{k_t} \in \tilde{\mathcal{I}}_t^{(m)}} \delta_{i_{k_t}, \beta(i_{k_t})}$.
If $| \mathcal{I}_t^{(m)}| = |\tilde{\mathcal{I}}_t^{(m)}|$ and $\sum_{i_{k_t} \in \mathcal{I}_t^{(m)}} \delta_{i_{k_t}, \beta(i_{k_t})} = \sum_{i_{k_t} \in \tilde{\mathcal{I}}_t^{(m)}} $ $ \delta_{i_{k_t}, \beta(i_{k_t})}$, then we are indifferent between the two cardinality matchings and write $(\mathcal{K}_t^{(m)}, \alpha, \mathcal{I}_t^{(m)}, $ $\beta) $ $ \sim (\tilde{\mathcal{K}}_t^{(m)}, \tilde{\alpha}, \tilde{\mathcal{I}}_t^{(m)}, \tilde{\beta})$.
If we have either $(\mathcal{K}_t^{(m)}, \alpha, \mathcal{I}_t^{(m)}, \beta)$ $\succ (\tilde{\mathcal{K}}_t^{(m)}, \tilde{\alpha}, \tilde{\mathcal{I}}_t^{(m)}, \tilde{\beta})$ or $(\mathcal{K}_t^{(m)}, \alpha, $ $ \mathcal{I}_t^{(m)}, $ $\beta) $ $ \sim (\tilde{\mathcal{K}}_t^{(m)}, \tilde{\alpha}, $ $ \tilde{\mathcal{I}}_t^{(m)}, $ $\tilde{\beta})$, we write $(\mathcal{K}_t^{(m)}, \alpha, \mathcal{I}_t^{(m)}, \beta) \succsim (\tilde{\mathcal{K}}_t^{(m)}, \tilde{\alpha}, \tilde{\mathcal{I}}_t^{(m)}, \tilde{\beta})$.
Our optimal multilevel matching problem is the following.

\vspace{.4cm}
\begin{problem}\label{problem}
For given sets of cluster-level constraints $\mathcal{A}$ and unit-level constraints $\mathcal{B}_{\boldsymbol{a}}$, find a matching $(\mathcal{K}_t^{(m)}, \alpha, \mathcal{I}_t^{(m)}, \beta)$ that satisfies $\mathcal{A}$ and $\mathcal{B}_{\boldsymbol{a}}$ such that, for any other matching $(\tilde{\mathcal{K}}_t^{(m)}, \tilde{\alpha},$ $\tilde{\mathcal{I}}_t^{(m)}, \tilde{\beta})$ that also satisfies $\mathcal{A}$ and $\mathcal{B}_{\boldsymbol{a}}$, $(\mathcal{K}_t^{(m)}, \alpha, \mathcal{I}_t^{(m)}, \beta) \succsim (\tilde{\mathcal{K}}_t^{(m)}, \tilde{\alpha}, \tilde{\mathcal{I}}_t^{(m)}, \tilde{\beta})$.
\end{problem}

Intuition may suggest that the the best way to solve Problem \ref{problem} and match with multilevel data is first to match clusters and then within matched clusters to match units. In our case study, this would require first pairing schools and then, within pairs of schools, pairing students. However this strategy will not always find the largest matched sample that is balanced as two schools that are paired on their school level characteristics may have different student compositions so that when their students are paired it may result in a smaller sample size than optimal. For this reason, the optimal matching strategy needs to contemplate what is optimal both at the student and school levels simultaneously. Applying Bellman's \citeyearpar{bellman_1957} principle of optimality, the optimal matching strategy is first to match units across all the possible combinations of pairs of treated and control clusters, and, once all possible combinations of matched units are known, then to match clusters.

In abstract terms, the following algorithm and proposition state this.
To implement the optimal assignments $\alpha$ and $\beta$, let $a_{k_t, k_c} = 1$ if treated cluster $k_t$ is paired to control cluster $k_c$ and $a_{k_t, k_c} = 0$ otherwise; similarly let $b_{i_{k_t}, i_{k_c}} = 1$ if treated unit $i$ in treated cluster $k_t$ is paired to control unit $i$ in control cluster $k_c$, and $b_{i_{k_t}, i_{k_c}} = 0$ otherwise.

\vspace{.4cm}
\begin{algorithm}\label{algorithm}
For each of the possible $K_t \times K_c$ pairs of treated and control clusters, find the optimal cardinality matching of units that satisfies $\mathcal{B}_{\boldsymbol{a}}$.
This is, for each $k_t \in \mathcal{K}_t$ and each $k_c \in \mathcal{K}_c$ find $m_{k_t, k_c} = \max_{\boldsymbol{b}} \sum_{i_{k_t} \in \mathcal{I}_{k_t}} \sum_{i_{k_c} \in \mathcal{I}_{k_c}} b_{i_{k_t}, i_{k_c}}$ $\textrm{subject to}$ $\boldsymbol{b} \in \mathcal{B}_{\boldsymbol{a}}$.
Then find the optimal cardinality cluster matching that solves $\max_{\boldsymbol{a}} \sum_{k_t \in \mathcal{K}_t} \sum_{k_c \in \mathcal{K}_c} m_{k_t, k_c} a_{k_t, k_c}$ $\textrm{subject to}$ $\boldsymbol{a} \in \mathcal{A}$.
\end{algorithm}

\vspace{.4cm}
\begin{proposition}
Algorithm \ref{algorithm} solves the optimal multilevel cardinality matching problem \ref{problem}.
\end{proposition}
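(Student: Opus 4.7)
The plan is to exploit Bellman's principle of optimality together with the separability of the unit-level subproblems across disjoint cluster pairs. The structural fact that makes this work is that $\mathcal{A}$ and $\mathcal{B}_{\boldsymbol{a}}$ are disjoint: the cluster-level constraints involve only the variables $a_{k_t,k_c}$, while for each pair $(k_t,k_c)$ the unit-level constraints in $\mathcal{B}_{\boldsymbol{a}}$ couple only the variables $b_{i_{k_t},i_{k_c}}$ indexed by units in those two specific clusters. Hence, given any feasible cluster assignment $\boldsymbol{a}\in\mathcal{A}$, the unit-matching problem decomposes across the matched pairs, and Algorithm \ref{algorithm} realizes this decomposition explicitly.

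First I would state and justify a decomposition lemma: for any $\boldsymbol{a}\in\mathcal{A}$,
\[
\max_{\boldsymbol{b}\in\mathcal{B}_{\boldsymbol{a}}} \sum_{k_t,k_c}\sum_{i_{k_t},i_{k_c}} b_{i_{k_t},i_{k_c}} \;=\; \sum_{k_t,k_c} m_{k_t,k_c}\, a_{k_t,k_c},
\]
where $m_{k_t,k_c}$ is the per-pair optimum computed in the first stage of Algorithm \ref{algorithm}. The $\leq$ direction follows because the restriction of any globally feasible $\boldsymbol{b}$ to a matched pair is feasible for that pair's subproblem, so its cardinality is at most $m_{k_t,k_c}$. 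The $\geq$ direction follows by concatenating per-pair optima: since the unit index sets are disjoint across distinct $(k_t,k_c)$ and $\mathcal{B}_{\boldsymbol{a}}$ factors as a Cartesian product of per-pair constraints, the concatenation is feasible and attains the sum.

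Second, I would substitute this identity into Problem \ref{problem}. The overall maximum cardinality of matched units equals
\[
\max_{\boldsymbol{a}\in\mathcal{A}} \max_{\boldsymbol{b}\in\mathcal{B}_{\boldsymbol{a}}} \sum_{i_{k_t},i_{k_c}} b_{i_{k_t},i_{k_c}} \;=\; \max_{\boldsymbol{a}\in\mathcal{A}} \sum_{k_t,k_c} m_{k_t,k_c}\, a_{k_t,k_c},
\]
which is precisely the outer integer program solved in the second stage of Algorithm \ref{algorithm}. Thus the algorithm attains the primary criterion of the preference ordering $\succsim$. For the secondary, distance-based tie-breaker, I would refine the inner subproblem to return, among all maximum-cardinality matches in pair $(k_t,k_c)$, one minimizing the per-pair distance $d_{k_t,k_c}=\sum_{i_{k_t}}\delta_{i_{k_t},\beta(i_{k_t})}$. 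The same additive-separability argument applied to distances shows that minimizing $\sum_{k_t,k_c} d_{k_t,k_c}\, a_{k_t,k_c}$ in the outer stage, restricted to cluster matchings achieving the maximum cardinality, produces the globally $\succsim$-preferred matching.

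The main obstacle I anticipate is verifying the factorization of $\mathcal{B}_{\boldsymbol{a}}$ in the presence of aggregated unit-level balance requirements such as fine balance or Kolmogorov--Smirnov constraints, which at first sight might couple units across distinct cluster pairs. The resolution is to read $\mathcal{B}_{\boldsymbol{a}}$ in Algorithm \ref{algorithm} as the product of per-pair feasibility sets used in the first stage; under this reading the decomposition lemma is immediate and the two-stage optimization is exact. I would flag this scoping point up front, since it is the substantive content that gives Bellman's principle its purchase here and separates Problem \ref{problem} from variants with truly global unit-level constraints.
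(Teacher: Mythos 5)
Your proposal is correct and follows essentially the same route as the paper's proof: rewrite the joint maximization as a nested two-stage maximization, then use the fact that the objective and the unit-level constraint set separate across cluster pairs (because $\mathcal{A}$ and $\mathcal{B}_{\boldsymbol{a}}$ are disjoint and $\mathcal{B}_{\boldsymbol{a}}$ factors per pair) so that the inner optimum equals $\sum_{k_t,k_c} m_{k_t,k_c}a_{k_t,k_c}$. You are in fact more explicit than the paper on two points it leaves implicit --- the lexicographic handling of the distance tie-breaker in $\succsim$ and the caveat that globally aggregated balance constraints (e.g., fine balance over the whole matched sample) would break the per-pair factorization --- both of which strengthen rather than alter the argument.
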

\begin{proof}
Let $f(\boldsymbol{a}, \boldsymbol{b})$ be the the total number of pairs of treated and control units matched by $\boldsymbol{b}$ within pairs of treated and clusters matched by $\boldsymbol{a}$.
In the abstract, in Problem \ref{problem} we want to maximize the function $f(\boldsymbol{a}, \boldsymbol{b})$ subject to the constraints $\mathcal{A}$ and $\mathcal{B}_{\boldsymbol{a}}$.
This is, find $\boldsymbol{a}$ and $\boldsymbol{b}$ to solve
\begin{equation} 
\max_{\boldsymbol{a}, \boldsymbol{b}} f(\boldsymbol{a}, \boldsymbol{b}) \textrm{ subject to } \boldsymbol{a} \in \mathcal{A}, \boldsymbol{b} \in \mathcal{B}_{\boldsymbol{a}}. \label{gen1}
\end{equation}
In a trivial way, we may solve (\ref{gen1}) by first solving
\begin{equation} 
g(\boldsymbol{a}) = \max_{\boldsymbol{b}} f(\boldsymbol{a}, \boldsymbol{b}) \textrm{ subject to } \boldsymbol{b} \in \mathcal{B}_{\boldsymbol{a}} \label{gen2}
\end{equation} 
for each $\boldsymbol{a} \in \mathcal{A}$, and then solving
\begin{equation} 
\max_{\boldsymbol{a}} g(\boldsymbol{a}) \textrm{ subject to } \boldsymbol{a} \in \mathcal{A}. \label{gen3}
\end{equation} 
While (\ref{gen2}) seems hard in general (because there are many possible choices of $\boldsymbol{b}$), the nested structure of the units-in-clusters problem makes it easier because $f(\boldsymbol{a}, \boldsymbol{b})$ separates into a sum of parts for cluster pairs because the constraint sets $\mathcal{A}$ and $\mathcal{B}_{\boldsymbol{a}}$ are disjoint.
Algorithm \ref{algorithm} does exactly this.
\end{proof}


In our case study, for each pairing of schools $\boldsymbol{a}$, we find the best pairing of students $\boldsymbol{b}$ within those schools (\ref{gen2}), and then pick the best pairing of schools with the associated best pairing of students for that pairing of schools (\ref{gen3}).
Again, while (\ref{gen2}) seems hard in general (because there are many possible student matches $\boldsymbol{b}$ across schools), the nested structure of the students-in-schools problem makes it easier because $f(\boldsymbol{a}, \boldsymbol{b})$ separates into a sum of parts for school pairs.
For example, if treated school $k_t$ is paired to control school $k_c$, then the contribution of schools $k_t$ and $k_c$ is the same of number of pairs regardless of how the other schools are paired.

With Algorithm \ref{algorithm}, the multilevel cardinality matching problem can be solved optimally by breaking it into simpler matching subproblems and recursively finding the optimal match.
This is an application of dynamic programming to matching in observational studies that takes advantage of the multilevel structure of the data (see \citealt{bertsekas_2005} for an extensive exposition of dynamic programming).

\subsection{Illustrative example}

We present a simple example to fix ideas.
In this example, there are 3 treated schools and 5 control schools indexed by $k_t \in \mathcal{K}_t = \{ 1, ..., 3 \}$ and $k_c \in \mathcal{K}_c = \{ 1, ..., 5\}$, respectively.
The first step of our multilevel matching strategy as implemented with cardinality matching is to find the student pair-matches across all the possible combinations of pairs of treated and control schools.
Here, since there are 3 treated schools and 5 control schools, we solve $3 \times 5 = 15$ student matching problems.\footnote{We match the students within treated school 1 and control schools $\{ 1, ..., 5\}$, treated school 2 and control schools $\{ 1, ..., 5\}$, and so on.}
For each of these 15 combinations of pairs of treated and control schools, we use cardinality matching and record the cardinality of the pair-matched students, $m_{k_t, k_c}$ (so $m_{1, 1}$ is the number of students that were pair-matched between treated school and control school 1, $m_{1, 2}$ is the number of pair-matched students between treated school 1 and control school 2, and so on). The size of the cardinalities $m_{k_t, k_c}$ will depend on the student covariate balance constraints, denoted $\mathcal{B}_{\boldsymbol{a}}$, and the cardinality may be zero if the balance constraints cannot be met.
Once we obtain all the $m_{k_t, k_c}$'s, the second step is to solve the school level matching problem, which maximizes the total number of pair-matched students within pair-matched schools subject to school level balance constraints, $\mathcal{A}$.
This problem is $\max_{\boldsymbol{a}} \sum_{k_t \in \mathcal{K}_t} \sum_{k_c \in \mathcal{K}_c} m_{k_t, k_c} a_{k_t, k_c}$ $\textrm{subject to}$ $\boldsymbol{a} \in \mathcal{A}$. 
Solving the problem backwards ---from the students to the schools, instead of the opposite--- ensures that we find the overall optimum number of student matches. 
If we were to do the opposite (this is, first match schools and then within matched schools match students), then we would likely find a suboptimal solution as we show in the comparison study in Section \ref{sec:comparison} below.

 

\subsection{Additional considerations}

Note that if we had three or more levels (such as students within schools within districts), then our multilevel matching strategy would extend naturally. With $l$ levels, the strategy would require first matching the lowest level $l$ under the assumptions that levels $l-1, l-2, ..., 1$ have been matched optimally, to then (once the matches at level $l$ are completed) matching level $l-1$ under the assumptions that levels $l-2, l-3..., 1$ have been matched optimally, and so on.

The multilevel matching problem above is formulated to maximize the size of the matched sample in terms of students, but it can be easily modified to maximize a weighted combination of students and clusters; namely, $\max_{\boldsymbol{a}} \sum_{k_t \in \mathcal{K}_t} \sum_{k_c \in \mathcal{K}_c} m_{k_t, k_c} a_{k_t, k_c} + \lambda \sum_{k_t \in \mathcal{K}_t} \sum_{k_c \in \mathcal{K}_c} a_{k_t, k_c}$ $\textrm{subject to}$ $\boldsymbol{a} \in \mathcal{A}$, for a suitable scalar $\lambda$.
Also, the multilevel matching problem can be formulated to minimize a covariate distance between students by using optimal matching to solve each of the component problems.
In all these cases, the optimal matching strategy is to match backwards, by first matching the units and then matching the clusters.

We have discussed the optimality of our multilevel matching strategy from a mathematical programming standpoint. From a statistical standpoint, this matching approach as implemented with cardinality matching gives priority to reducing bias over increasing precision, and is optimal in the sense that given a matching structure (e.g., pair matching) and degree of bias reduction (these are the covariate balancing requirements) it maximizes the accuracy of the study (by maximizing the size of the matched sample). More precisely, extending the argument in \citet{kilcioglu2016}, under a constant additive treatment effect model this approach minimizes the variance of a difference-in-means effect estimator if the intracluster correlation is less than one. In general, the statistical optimality of matching methods is an important area that remains to be studied within a formal statistical framework.

\section{Evaluation of the Matched Sample}	
\label{sec:assessment}
\vspace{-1.5cm} 

\textcolor{black}{
\subsection{Covariate Balance and Sample Size}	
}

After applying basic exclusion criteria, there are 64245 students in 517 schools, 150 private subsidized and 367 public schools (henceforth treated and control schools respectively). Out of the 64245 students, 15682 students are from treated schools and 48563 are from control schools. Using \textcolor{black}{the} multilevel matching strategy \textcolor{black}{above}, we matched in two stages within similar groups regions of the country (namely, regions I-III, IV-V, VI-VII, VIII, IX, X-XII and the Metropolitan region).

At the student level, we used cardinality matching to find the largest balanced sample of pairs of students across all the possible combinations of pairs of schools within the groups of regions.
\textcolor{black}{In each of these matches we balanced the means of 19 covariates (including student test scores, school test scores, and indicators for socioeconomic status and expected educational achievement; see Table \ref{meanbalstu} for details),\footnote{\textcolor{black}{To be precise, we required the absolute differences in means in these covariates to differ at most by 0.1 standard deviations.  The matched sample that we found did not only satisfy these mean balance constraints but achieved absolute differences in means smaller than 0.05 standard deviations. Please see Table  \ref{meanbalstu} below.}} approximately fine balanced the marginal distribution of four nominal covariates (sex, mother and father education, and household income; see Table \ref{approxfinebalstu}) and balanced marginal the distribution of two continuous covariates (the sum of the test scores in language and mathematics at baseline).}
Figure \ref{joint} shows not only that the marginal distributions of the baseline test scores are very closely balanced after matching but also their joint distribution.
As a matter of fact, the 95\% bivariate normal density contours are almost indistinguishable after matching.

\begin{table}[htbp]
\caption{Covariate balance at the student level after matching.  All the covariates are measured in 2004. \textcolor{black}{The last column shows standardized differences in means.}}
\textcolor{black}{
\label{meanbalstu}
\vspace{.2cm}
\begin{center}
\begin{tabular}{@{\extracolsep{2pt}}lrrr}
\hline
\multicolumn{1}{c}{Covariate} & \multicolumn{2}{c}{Mean} & \multicolumn{1}{c}{Std. dif.} \\  
\cline{2-3}
& \multicolumn{1}{c}{Private} & \multicolumn{1}{c}{Public} & \multicolumn{1}{c}{} \\ 
\hline
Language score & 244.84 & 244.75 & 0.00 \\ 
Mathematics score & 245.71 & 244.92 & 0.02 \\ 
Natural science score & 248.74 & 248.57 & 0.00 \\ 
Social science score & 244.38 & 244.49 & -0.00 \\ 
School language score & 238.84 & 238.42 & 0.02 \\ 
School mathematics score & 239.89 & 239.32 & 0.03 \\ 
School female proportion & 0.51 & 0.51 & -0.01 \\ 
School number of students & 84.23 & 84.17 & 0.00 \\ 
School teacher to student ratio & 8.22 & 8.14 & 0.04 \\ 
Urban area & 0.85 & 0.85 & -0.00 \\ 
Socioeconomic status A & 0.13 & 0.13 & -0.01 \\ 
Socioeconomic status B & 0.57 & 0.56 & 0.01 \\ 
Socioeconomic status C & 0.28 & 0.29 & -0.01 \\ 
Socioeconomic status D & 0.02 & 0.01 & 0.02 \\ 
Expected education: primary & 0.01 & 0.01 & -0.00 \\ 
Expected education: secondary, technical-professional & 0.72 & 0.69 & 0.05 \\ 
Expected education: secondary, scientific-humanities & 0.16 & 0.17 & -0.04 \\ 
Expected education: technical-professional & 0.10 & 0.10 & -0.01 \\ 
Expected education: college & 0.01 & 0.02 & -0.05 \\ 
   \hline
\end{tabular}
\end{center}
}
\end{table}

\begin{table}[htbp]
\caption{Balance for nominal covariates at the student level.  All the covariates are measured in 2004.  \textcolor{black}{Approximate fine} balance for sex,
type of education of the mother and father, and household income category.  The tabulated values are 
counts of the number of students in each category.  In addition, matching was exact for groups of counties (not shown here).}
\textcolor{black}{
\label{approxfinebalstu}
\vspace{.2cm}
\begin{center}
\begin{tabular}{@{\extracolsep{2pt}}lrr}
\hline
\multicolumn{1}{c}{Covariate} & \multicolumn{2}{c}{School type} \\
\cline{2-3}
 & Private & Public \\ 
\hline
Sex & & \\
\hspace{0.35cm} Male & 4634 & 4638 \\ 
\hspace{0.35cm} Female & 4168 & 4164 \\ 
\hline
Mother education & & \\
\hspace{0.35cm} Primary school & 4016  & 4026 \\ 
\hspace{0.35cm} Secondary school & 2578  & 2577 \\ 
\hspace{0.35cm} Technical &  176   &  173 \\ 
\hspace{0.35cm} College or higher &  73    &  75 \\ 
\hspace{0.35cm} Missing & 1959  & 1951 \\
\hline
Father education & & \\ 
\hspace{0.35cm} Primary school & 3538 & 3552  \\ 
\hspace{0.35cm} Secondary school & 2732 & 2724 \\ 
\hspace{0.35cm} Technical &  196   &  195 \\ 
\hspace{0.35cm} College or higher &  130   &  131 \\ 
\hspace{0.35cm} Missing & 2206  & 2200 \\
\hline
Household income category (in 1000 pesos) & & \\
\hspace{0.35cm} $[0, 100)$ & 251   & 248 \\ 
\hspace{0.35cm} $[100, 200]$ & 3356 & 3373 \\ 
\hspace{0.35cm} $(200, 400]$ & 3346  & 3346 \\ 
\hspace{0.35cm} $(400, 600]$ & 974   & 971 \\ 
\hspace{0.35cm} $(600, 1400]$ &  289  &  291 \\ 
\hspace{0.35cm} $>1400$ &   206   &  203 \\ 
\hspace{0.35cm} Missing & 380   & 370 \\  
\hline
\end{tabular}
\end{center}
}
\end{table}%

\begin{figure}[h!]
\caption{Distribution of student test scores at baseline after matching.  The baseline test scores are measured in 2004. The ellipses trace the 95\% bivariate normal density contours of the joint distributions of test scores for the matched treated and control units.  The contours are almost identical showing that not only marginal distributions of the test scores are very closely balanced but also their joint distribution.}
\begin{center}
\includegraphics[scale=0.9]{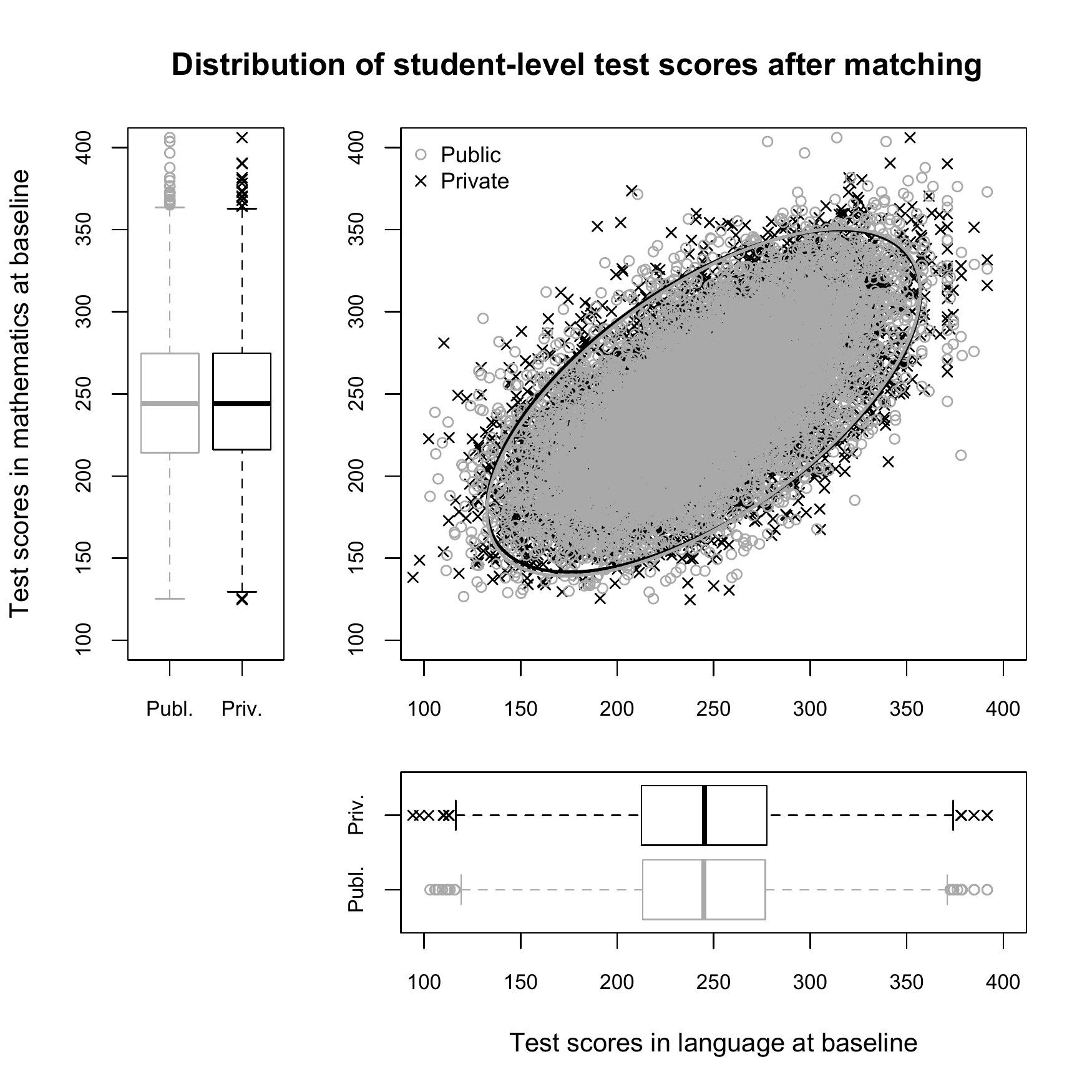}
\end{center}
\label{joint}
\end{figure}

At the school level, we used the modification of cardinality matching in the second stage of Algorithm \ref{algorithm} and mean balanced 16 other covariates: percentage female, total enrollment, language and math scores (plus indicators for missing values), urban area, parental income categories (1-5), and socioeconomic groups (A-D). Again, covariates were exact matched for the 7 region groups \textcolor{black}{(for this, we basically solved different matching problems for each of these region groups)}. 
We balanced all covariates with and without weighting for the size of the school; see Table \ref{meanbalsch}.  Note that after matching all the differences in means are smaller than 0.05 standard deviations.
In this way, we matched \textcolor{black}{8802} students in \textcolor{black}{4401} pairs, and \textcolor{black}{280} schools in \textcolor{black}{140} pairs. 
In this match, \textcolor{black}{all the available regions} of the country are represented in both the treatment and control groups.

\begin{table}[htbp]
\caption{Covariate balance at the school level after matching.  
Both means and standardized differences are weighted by the number of students in each school.
\textcolor{black}{The last column shows standardized differences in means.}
}
\label{meanbalsch}
\vspace{.2cm}
\centering
\textcolor{black}{
\begin{tabular}{@{\extracolsep{2pt}}lrrr}
\hline
\multicolumn{1}{c}{Covariate} & \multicolumn{2}{c}{Mean} & \multicolumn{1}{c}{Std. dif.} \\  
\cline{2-3}
& \multicolumn{1}{c}{Private} & \multicolumn{1}{c}{Public} & \multicolumn{1}{c}{} \\ 
\hline  
  Female proportion & 0.46 & 0.49 & -0.09 \\ 
  Number of students & 193.38 & 193.38 & 0.00 \\ 
  Language score & 241.92 & 239.99 & 0.10 \\ 
  Mathematics score & 231.21 & 228.92 & 0.09 \\ 
  Language score missing & 0.02 & 0.02 & -0.05 \\ 
  Mathematics score missing & 0.02 & 0.02 & -0.05 \\ 
  Urban area & 0.92 & 0.94 & -0.09 \\ 
  Income category 1 & 0.20 & 0.24 & -0.08 \\ 
  Income category 2 & 0.67 & 0.66 & 0.02 \\ 
  Income category 3 & 0.12 & 0.10 & 0.05 \\ 
  Income category 4 & 0.01 & 0.00 & 0.08 \\ 
  Income category 5 & 0.00 & 0.00 & 0.00 \\ 
  Socioeconomic status A & 0.30 & 0.35 & -0.10 \\ 
  Socioeconomic status B & 0.55 & 0.54 & 0.03 \\ 
  Socioeconomic status C & 0.15 & 0.12 & 0.09 \\ 
  Socioeconomic status D & 0.00 & 0.00 & 0.00 \\
\hline
\end{tabular}
}
\end{table}

\subsection{Comparison to Myopic Matching}	
\label{sec:comparison}

We now compare the performance of our dynamic matching strategy to \textcolor{black}{an alternative} strategy which first matches schools followed by matching students within these matched schools. We implement this alternative matching strategy using two \textcolor{black}{different} methods: optimal matching (using a robust Mahalanobis distance matrix with calipers for violations on the propensity score; see \citealt{rosenbaum1985} and Section 8.3 of \citealt{Rosenbaum:2010}) and cardinality matching, using the approximation scheme in \citet{zubizarreta2016}.\footnote{This approximation scheme solves a linear program relaxation of the integer programming problem in cardinality matching and then rounds the solution by solving a linear program again or using a more specialized network algorithm.  This approximation may violate to some extent some of the balancing constraints but it runs quickly (in polynomial time), and in many applications the violations to the balancing constraints are not substantial.} We call \textcolor{black}{these two more standard strategies} ``myopic'' because they make locally optimal matching decisions at the school and student stages separately rather than collectively in view of the global optimum. The results from this comparison are summarized in Table \ref{tab1} below.


By construction, myopic optimal matching uses all the available observations, but results in substantial imbalances at both the school and student level. In fact, out of the 17 school-level covariates, 4 of them have \textcolor{black}{differences in means greater than 0.1 standard deviations} after matching, and the total variation distance \textcolor{black}{(defined as the sum of the treated-minus-control absolute value differences in percentage points across all categories of nominal covariates; \citealt{pimentel2015}) for the marginal distributions of the covariates in Table \ref{approxfinebalstu} is more than 10 times larger with optimal matching than with the cardinality matching methods.}
Myopic cardinality matching achieves good covariate balance both at the school and student levels, but matches considerably fewer students: 4913 students instead of the 8802 students matched by our strategy, which illustrates why working backwards from the optimal solutions is better than working forward in a greedy sense. \textcolor{black}{This demonstrates that our dynamic strategy to multilevel matching reduces imbalances and uses a larger portion of the sample.}

\begin{table}
\begin{center}
\caption{Comparison of matching methods.}
\label{tab1}
\begin{tabular}{@{\extracolsep{2pt}}lcccccc@{}}
\hline
\multicolumn{1}{c}{} & \multicolumn{2}{c}{Matches} & \multicolumn{2}{c}{Imbalances} & \multicolumn{2}{c}{Speed}  \\
\cline{2-3}
\cline{4-5}
\cline{6-7}
\multicolumn{1}{c}{\multirow{2}{*}{Method}} & \multirow{2}{*}{Schools} & \multirow{2}{*}{Students} & Means & TV & Problems & Time  \\
 &  &  & schools & students & solved & (min) \\
\hline
Myopic optimal & \multirow{2}{*}{149} & \multirow{2}{*}{9643} & \multirow{2}{*}{4} & \multirow{2}{*}{0.179} & \multirow{2}{*}{149} & \multirow{2}{*}{0.48}  \\ 
matching &  &  &  &  &  &   \\
Myopic cardinality  & \multirow{2}{*}{143} & \multirow{2}{*}{4913} & \multirow{2}{*}{0} & \multirow{2}{*}{0.016} & \multirow{2}{*}{143} & \multirow{2}{*}{1.37}  \\
matching$^\dagger$ &  &  &  &  &  &    \\
Dynamic cardinality & \multirow{2}{*}{140} & \multirow{2}{*}{8802} & \multirow{2}{*}{0} & \multirow{2}{*}{0.011} & \multirow{2}{*}{10261} & \multirow{2}{*}{148.43}  \\  
matching$^\dagger$ &  &  &  &  &  &    \\
\hline
\end{tabular}
\end{center}
\vspace{1mm}
\footnotesize{\emph{Note 1:} For optimal matching, we used a robust Mahalanobis distance matrix \citep{Rosenbaum:2010} with calipers for violations on the propensity score (\citealt{rosenbaum1985}).
For cardinality matching (\scriptsize$\dagger$\footnotesize) we used the approximation scheme in \citet{zubizarreta2016}.
In both cases, we used the \texttt{R} package \texttt{designmatch}.	\\
\emph{Note 2:} 
Under ``Imbalances,'' the first column (``Means schools'') shows the number of covariates that have imbalance in means at the school level (there are 17 school covariates in total).
In the second column, ``TV'' denotes the total variation distance for the marginal distributions of key covariates at the student level \citep{pimentel2015}.	\\
\emph{Note 3:}
Under ``Speed,'' the first column shows the total number of student matching problems solved, and the second column shows the total time in minutes they took without parallelizing.  On average, with the three matching methods each student matching problem took less than a second to be solved.
}
\end{table}
\normalsize

In terms of speed, the dynamic matching method takes longer. \textcolor{black}{On a standard desktop computer \textcolor{black}{(with a 3.4 GHz Intel Core i5 processor, 16 GB 1600 MHz DDR3 of memory, and the OS X 10.10.4 operating system)}, the dynamic method takes nearly two and a half hours, as opposed to the 0.48 and 1.17 minutes needed for the two myopic methods.} \textcolor{black}{The dynamic method requires more computing time,} since it is considering all the possible combinations of pairs of treated and control schools within groups of regions; in other words, it is solving 10261 student matching problems as opposed to 149 and 143 with the other methods.  On average, with the dynamic method, as well as with the other matching methods, each student matching problem takes less than one second to be solved. Again, the difference in time is driven by the number of student matching problems that each method is considering. \textcolor{black}{To further reduce the computing time required for the dynamic matching}, \textcolor{black}{one could find the student level matches} in parallel by separating all the possible pairs of treated and control schools into smaller mutually exclusive but exhaustive pairs of treated and control schools.

%

\vspace{-1cm}
\textcolor{black}{
\subsection{Limited Overlap and the Estimand}	
}
\label{sec:overlap}

In observational studies, \textcolor{black}{a common limitation encountered in practice is limited overlap or lack of common support of the covariate distributions across the treatment samples.} \textcolor{black}{As \citet{crump2009} note,} lack of common support can lead to estimates that are highly biased, too variable, and overly sensitive to model misspecification. \textcolor{black}{When faced with limited overlap,} investigators often ``trim'' the samples and restrict their analyses to subsamples that have common support.  For example, \citet{dehejia1999} discard the control units with an estimated propensity score smaller (greater) than the minimum (maximum) one of the treated units. Formal methods for trimming are proposed by \citet{crump2009} and \citet{Rosenbaum:2012}. However, most of these methods address the problem of limited overlap using the propensity score or another summary of the covariates such as the Mahalanobis distance. 
\textcolor{black}{In contrast, with our multilevel matching strategy, cardinality matching} addresses the common support problem without resorting to a summary of the covariates, as it directly finds the largest subsamples of treated and control units that meant the common support or balance constraints set by the investigator.	
 
With all these methods, restricting the analysis to the samples of treated and control units that overlap (or, ultimately, that are balanced) changes the estimand \textcolor{black}{such that it only applies to the population of \textcolor{black}{comparable or marginal units which} may or may not receive the treatment \textcolor{black}{\citep{Rosenbaum:2012}}.}  More specifically, the estimands that result from trimming units will be a more local versions of commonly used estimands such as the average treatment effect (ATE) or average treatment \textcolor{black}{effect} on the treated (ATT), and will depend on the sample data. \textcolor{black}{Changing the estimand in this way acknowledges} the inherent limitations imposed by the data. It also makes explicit that inferences to other target populations will require further modeling assumptions \citep{crump2009}. In other words, this approach places greater importance on internal as opposed to external validity \citep{shadish2002}.



In our study, we estimate the effect of a cluster-level treatment on a sample of comparable units within comparable clusters. 
Specifically, we estimate the effect of attending private subsidized (voucher) schools instead of public schools on a sample of students by selecting both comparable students and schools.  If we only selected schools to ensure common support and covariate balance, our estimand would be the effect of a school-level intervention on the students of a sample of comparable schools. However, since we also selected students (to ensure comparability of the students within the schools), our estimand is the effect of a school-level intervention on a sample of comparable students within comparable schools.


To obtain a basic understanding of how the matched sample differs from the larger population, we describe the samples of matched treated and control students and compared them to the full samples of students \citep{silber2015}. Table \ref{tab:desc} compares the means of the covariates in the samples of students before matching (``All'') and in the unmatched and matched samples. For the students in private schools, there are a number of significant differences between the matched and unmatched samples (not denoted in the table for clarity in the exposition). The largest differences are in urban area, with a lower proportion of the unmatched students living in urban areas, and in socieconomic status, with the unmatched students having higher a socieconomic status on average. These differences are less marked when comparing the matched and complete (``All'') samples of students in private schools. In principle, our results could be generalized to a population of students with characteristics similar to the ones of the matched samples, which do not differ much from the original sample of students that that under the private subsidized school treatment. However, such generalizations from the sample to a larger population would require additional assumptions and methods \citep{stuart2011use,hartman2015sate}.

\begin{sidewaystable}[ph!]
\caption{Description of the samples of students before and after matching.}
\label{tab:desc}
\vspace{2mm}
\centering
\begin{tabular}{@{\extracolsep{2pt}}lrrrrrr}
  \hline
 \multicolumn{1}{c}{Covariate} & \multicolumn{3}{c}{Private} & \multicolumn{3}{c}{Public} \\ 
 \cline{2-4}
 \cline{5-7} 
 & All & Unmatched & Matched & Matched & Unmatched & All \\ 
  \hline
  Language score & 244.26 & 243.51 & 244.84 & 244.75 & 247.67 & 247.14 \\ 
  Mathematics score & 245.04 & 244.19 & 245.71 & 244.92 & 247.28 & 246.85 \\ 
  Natural science score & 248.19 & 247.47 & 248.74 & 248.57 & 250.73 & 250.34 \\ 
  Social science score & 244.06 & 243.64 & 244.38 & 244.49 & 246.69 & 246.29 \\ 
  School language score & 239.39 & 240.09 & 238.84 & 238.42 & 240.59 & 240.19 \\ 
  School mathematics score & 240.46 & 241.20 & 239.89 & 239.32 & 241.94 & 241.47 \\ 
  School female proportion & 0.52 & 0.52 & 0.51 & 0.51 & 0.53 & 0.53 \\ 
  School number of students & 83.49 & 82.54 & 84.23 & 84.17 & 82.01 & 82.40 \\ 
  School teacher to student ratio & 8.18 & 8.14 & 8.22 & 8.14 & 7.96 & 7.99 \\ 
  Urban area & 0.82 & 0.79 & 0.85 & 0.85 & 0.81 & 0.82 \\ 
  Socioeconomic status A & 0.15 & 0.17 & 0.13 & 0.13 & 0.16 & 0.16 \\ 
  Socioeconomic status B & 0.55 & 0.52 & 0.57 & 0.56 & 0.51 & 0.52 \\ 
  Socioeconomic status C & 0.28 & 0.27 & 0.28 & 0.29 & 0.31 & 0.31 \\ 
  Socioeconomic status D & 0.03 & 0.04 & 0.02 & 0.01 & 0.02 & 0.02 \\ 
  Expected education: primary & 0.01 & 0.02 & 0.01 & 0.01 & 0.01 & 0.01 \\ 
  Expected education: secondary, technical-professional & 0.69 & 0.66 & 0.72 & 0.69 & 0.61 & 0.62 \\ 
  Expected education: secondary, scientific-humanities & 0.17 & 0.19 & 0.16 & 0.17 & 0.23 & 0.22 \\ 
  Expected education: technical-professional & 0.11 & 0.11 & 0.10 & 0.10 & 0.13 & 0.12 \\ 
  Expected education: college & 0.02 & 0.02 & 0.01 & 0.02 & 0.02 & 0.02 \\ 
   \hline
\end{tabular}
\end{sidewaystable}

\section{Outcome Analyses}
\label{sec:out}
\textcolor{black}{Having found the matched sample, we now estimate the effect of attending a private voucher school, test its significance, and assess the robustness of the findings to biases due to unobserved covariates.}
\textcolor{black}{We first assume that treatment is as-if randomly assigned to clusters conditional on the matched pairs \citep{Small:2008b} and explore whether our results would differ if we relax this assumption using a sensitivity analysis for clustered observational studies developed by \citet{Hansen:2014}.}

\singlespacing
\subsection{Randomization Inference When Treatment is Assigned at the School Level}
\doublespace

Following \citet{Small:2008b}, we collect in the set $\mathbf{\Omega}$ the $2^K$ treatment assignments for all $2K$ clusters, $\mathbf{Z} = (Z_{11}, Z_{12}, \dots, Z_{K2})^T$. If the probability of receiving treatment is equal for each school in each matched pair, then the conditional distribution of $\mathbf{Z}$ given that there is exactly one treated school in each pair equals the randomization distribution with $\Pr(Z_{kj} = 1 | y_{Tkji}, y_{Ckji}, \mathbf{x}_{kji}, u_{kji}, \mathbf{\Omega}) = 1/2$ for each school $j$ in pair $k$. Write $\mathbf{Y} = (Y_{111}, \dots, Y_{K2,n_{k2}})^T$ for the $N = \sum_{k,j}n_{k,j}$ dimensional vector of observed responses with the same notation for $\mathbf{y}_c$, which are potential responses under control.

We denote $T = t(\mathbf{Z}, \mathbf{Y})$ as our test statistic. Under the sharp null hypothesis of no treatment effect, $\mathbf{Y} = \mathbf{y}_c$, and therefore $T = t(\mathbf{Z}, \mathbf{y}_c)$. If treatment were randomly assigned within matched pairs, then
$
\Pr\{ t(\mathbf{Z}, \mathbf{Y}) \geq v | y_{Tkji}, y_{Ckji}, \mathbf{x}_{kji}, u_{kji}, \mathbf{\Omega}\} = \Pr\{ t(\mathbf{Z}, \mathbf{y}_c) \geq v | y_{Tkji}, y_{Ckji}, \mathbf{x}_{kji}, u_{kji}, \mathbf{\Omega}\}
$
with $\Pr(\mathbf{Z} = \mathbf{z}|y_{Tkji}, y_{Ckji}, \mathbf{x}_{kji}, u_{kji}, \mathbf{\Omega}) = 1/|\mathbf{\Omega}|$ for some cutoff value $v$.

For $T$, we use a test statistic from \citet{Hansen:2014}. This test statistic is a function of $q_{kji}$, which is a score or rank given to $Y_{kji}$, so that under the null hypothesis, the $q_{kji}$ are functions of the $y_{Ckji}$ and $\mathbf{x}_{kji}$, and they do not vary with $Z_{kj}$. To make $q_{kji}$ resistant to outliers, we use the ranks of the residuals when $Y_{kji}$ is regressed on the student level covariates using Huber's method of m-estimation following \citet{Small:2008b}. The test statistic $T$ is 
\[
T = \sum_{k=1}^K B_k Q_k
\]
where
\[
B_k = 2Z_{k1} - 1 = \pm1, \;\;\; Q_k = \frac{w_k}{n_{k1}}\sum_{i=1}^{n_{k1}}q_{k1i} - \frac{w_k}{n_{k2}}\sum_{i=1}^{n_{k2}}q_{k2i}.
\]

\citet{Hansen:2014} show that $T$ is the sum of $k$ independent random variables each taking the value $\pm Q_k$ with probability $1/2$, so $E(T) = 0$ and $\mathrm{var}(T) = \sum_{k=1}^K Q_k^2$.  The central limit theorem implies that as $K \rightarrow \infty$, then $T/\sqrt{\mathrm{var}(T)}$ converges in distribution to the standard Normal distribution. In the above equation, $w_k$ defines the weights which are a function of $n_{kj}$. \citet{Hansen:2014} discuss possible choices for $w_k$.  One possibility is to use constant weights, $w_k \propto 1$.\footnote{Another possibility is to use weights that are proportional to the total number of students in a matched cluster pair: $w_k \propto n_{k1} + n_{k2}$ or $w_k = (n_{k1} + n_{k2})/\sum_{\ell=1}^K(n_{\ell1} + n_{\ell2})$. An analysis, with these weights did not alter the results.}

If we test the hypothesis of a shift effect instead of the hypothesis of no effect, we can apply the method of \citet{Hodges:1963} to estimate the private school effect. 
The Hodges-Lehmann (HL) estimate of $\tau$ is the value of $\tau_0$ that when subtracted from $Y_{kji}$ makes $T$ as as close as possible to its \textcolor{black}{expectation under the null}.  
Intuitively, the point estimate $\hat{\tau}$ is the value of $\tau_0$ such that $T$ equals $0$ when $T_{\tau_0}$ is computed from $Y_{kji} - Z_{kj}\tau_0$. Using constant effects is convenient, but this assumption can be relaxed; see \citet{Rosenbaum:2003}. If the treatment has an additive effect, $Y_{kji} = y_{Ckji} + \tau$ then a 95\% confidence interval for the additive treatment effect is formed by testing a series of hypotheses $H_0: \tau = \tau_0$ and retaining the set of values of $\tau_0$ not rejected at the 5\% level. 

\subsection{Comparative Effectiveness of Public Versus Private Subsidized Schools}

We now test the hypothesis of no effect for private subsidized schools on test scores. We measure test scores using an additive measure of language and mathematics scores. The approximate one-sided $p$-value for the test of the sharp null hypothesis is 0.207. If there are no hidden confounders, the point estimate of the private subsidized treatment effect is $\hat{\tau} =$ 2.55 with a 95\% confidence interval of -3.78 and 8.72. Thus, we cannot reject the hypothesis that attending a private subsidized school has no effect on test scores.  In the education literature, an effect size of 0.20 of a standard deviation is considered to be an educationally meaningful effect. Our estimated treatment effect is only 0.027 of a standard deviation. Therefore, our estimated treatment effect is well below the threshold for a meaningful effect.  We next explore the likelihood that bias from a hidden confounder masks a treatment effect.

 
\subsection{Test of Equivalence and Sensitivity Analysis}	
\label{subsec_equivalence_sensitivity}

In an observational study, one concern is that bias from a hidden covariate can give the impression that a treatment effect exists when in fact no effect is present. Bias from hidden confounders can also mask an actual treatment effect leaving the investigator to conclude there is no effect when in fact such an effect exists. We explore this possibility using a test of equivalence and a sensitivity analysis \citep{Rosenbaum:2008a,Rosenbaum:2009b,Rosenbaum:2010}. 

Above we were unable to reject the null hypothesis that $\tau = 0$ for all students. Next, we apply a test of equivalence to test the hypotheses that $\tau$ is not small. Under a test of equivalence, we test the following null hypothesis $H_{\neq}^{(\delta)}:|\tau| > \delta$. Rejecting $H_{\neq}^{(\delta)}$ provides a basis for asserting with confidence that $|\tau| < \delta$. $H_{\neq}^{(\delta)}$ is the union of two exclusive hypotheses: $\overleftarrow{H}_{0}^{(\delta)}:\tau \leq -\delta$ and $\overrightarrow{H}_{0}^{(\delta)}: \tau \geq \delta$, and $H_{\neq}^{(\delta)}$ is rejected if both $\overleftarrow{H}_{0}^{(\delta)}$ and $\overrightarrow{H}_{0}^{(\delta)}$ are rejected \citep{Rosenbaum:2009b}. We can apply the two tests without correction for multiple testing since we test two mutually exclusive hypotheses. 

With a test of equivalence, it is not possible to demonstrate a total absence of effect, but if this were a randomized trial, we could test that our estimated effect is not as large as $\delta$ by rejecting $H_{\neq}^{(\delta)}:|\tau| > \delta$. Since the treatment was not randomly assigned, it may be the case that we reject the null hypothesis of equivalence due to hidden confounding.  However, using a sensitivity analysis we may find evidence that the test of equivalence is insensitive to biases from nonrandom treatment assignment. 

 Thus far we have assumed that within matched pairs, receipt of the treatment is effectively random conditional on the matches. We consider how sensitive our conclusions are to violations of this assumption using a model of sensitivity analysis discussed in \citeauthor{Rosenbaum:2002} (2002, chapter 4). In our study, matching on observed covariates $\mathbf{x}_{kji}$ made students more similar in their chances of being exposed to the treatment. However, we may have failed to match on an important unobserved covariate $u_{kji}$ such that $\mathbf{x}_{kji} = \mathbf{x}_{kji'} \; \forall \; k, j, i, i'$, but possibly $u_{kji} \neq u_{kji'}$.  If true, the probability of being exposed to treatment may not be constant within matched school pairs. To explore this possibility, we use a sensitivity analysis. First, define $\pi_{k}$ as the probability that student $i$ in pair $k$ was treated. For two matched students in pair $k$, say $i$ and $i^\prime$, because they have the same observed covariates $\mathbf{x}_{kji} = \mathbf{x}_{kji^\prime}$ it may be true that $\pi_{k} = \pi_{k^\prime}$. However, if these two students differ in an unobserved covariate, $u_{kji} \neq u_{kji^\prime}$, then these two students may differ in their odds of being exposed to the private school treatment by at most a factor of $\Gamma \geq 1$ such that 

\begin{equation}
\frac{1}{\Gamma} \leq \frac{\pi_{k}/(1 - \pi_{k'})}{\pi_{k'}/(1 - \pi_{k})} \leq \Gamma, \;\; \forall \; k, k', \; \mathrm{with} \; \mathbf{x}_{kji} = \mathbf{x}_{kji'} \; \forall \; j,i,i'.
\end{equation}

If $\Gamma=1$, then  $\pi_{k} = \pi_{k^\prime}$, and the randomization distribution for $T$ is valid. If $\Gamma > 1$, then quantities such as $p$-values and point estimates are unknown but are bounded by a known interval. In a sensitivity analysis, we observe at which value of $\Gamma$ the upper bound on the $p$-value exceeds 0.05.  If the value of $\Gamma$ is large, we can be confident that it would take a large bias from a hidden confounder to reverse the conclusions of the study. The derivation for the sensitivity analysis as applied to our test statistic $T$ is in \citet{Hansen:2014}. 


Under a test of equivalence, we may be able to reject $H_{\neq}^{(\delta)}:|\tau| > \delta$ if the $p$-value from the test is low.  Rejecting this null, allows us to infer that the estimate treatment effect is not as large as $\delta$. We then apply the sensitivity analysis to understand whether this inference is sensitive to biases from nonrandom treatment assignment. In the analysis, we observe at what value of $\Gamma$ the $p$-value exceeds the conventional 0.05 threshold for each test. If this $\Gamma$ value is relatively large, we can be confident that the test of equivalence is not sensitive to hidden bias from nonrandom treatment assignment.\footnote{\citet{Hansen:2014} note that sensitivity to hidden bias may vary with the choice of weights $w_k$. To understand whether different weights lead to different sensitivities to hidden confounders, we can conduct a different sensitivity analysis for each set of weights and correct these tests using a multiple testing correction \citep{Rosenbaum:2012}. We then report a single corrected $p$-value for a value of $\Gamma$.}
 
\singlespacing
\subsection{How Much Bias Would Need to be Present to Mask a Positive Effect of Private Subsidized Schools?}
\doublespace

In this test, the null hypothesis asserts $H_{\neq}^{(\delta)}:|\tau| > \delta$ for some specified $\delta > 0$. Rejection of this null hypothesis provides evidence that the effect of attending a private voucher school on test scores is less than $\delta$. What values should we select for $\delta$?  A number of studies in the literature have found that private voucher schools increase test score achievement. The smallest effect size in the extant literature is 0.15 of a standard deviation \citep{Sapelli:2002}. However, among low income students the effects may be as large as 0.5 of a standard deviation, and \citet{Sapelli:2005} find an effect size of 0.6 standard deviations.  These results suggest a range of possible effects from 0.15 to 0.6 standard deviations.  To that end, we use three values for $\delta$ of 0.15, 0.30 and 0.6 standard deviations.  This allows us to test whether the point estimates in our study are equivalent to small, medium or large voucher effects. Thus we define three values $\delta_1$, $\delta_2$, and $\delta_3$ to correspond to these three different possible effect sizes.

We now apply the test of equivalence and sensitivity analysis to the results.  For this match, we are able to reject $H_{\neq}^{(\delta_1)}$ with $p$-value of 0.035 when $\Gamma = 1$.  Therefore, we are able to reject the null that the smaller treatment effect we observe in this design is equivalent to the smallest estimated effect in the extant literature. However, we find that when $\Gamma$ is as small as 1.09 the $p$-value for the test of equivalence is 0.053.  Thus if students differed by as much as 9 percent in the odds of being treated that could explain our inference. For a moderate effect size of 0.30 standard deviations, we can easily reject $H_{\neq}^{(\delta_2)}$ when $\Gamma =$ 1 ($p > .001$), and we find that when $\Gamma =$ 2.8 the $p$-value is 0.049.  A bias of magnitude $\Gamma =$ 2.8 means that two matched students might differ in terms of an unobserved $u_{kji}$ such that one student is almost three as like as the other to attend a private voucher school before it would alter our conclusions. Finally, for a large effect size we find that when $\Gamma =$ 9.8, the $p$-value is 0.049. Therefore, it would take a very large bias for our conclusions about a large treatment effect to be altered. 

\section{Summary and Discussion}
\label{sec:conc}

Clustered observational studies with hierarchical or multilevel data are very common in the health and social sciences.
In these settings, we have shown that the optimal matching strategy is, under the assumption that clusters have been matched optimally, first to match units and then, considering these optimal unit-level matches, to match clusters. We emphasized that this strategy explicitly uses the nested structure of the data by breaking the multilevel matching problem into simpler, smaller matching subproblems that are solved only once and that can be solved in parallel to yield an optimal global solution. We implemented this strategy using and extending cardinality matching to find the largest matched sample of pairs of treated and control units within pairs of treated and control clusters that is balanced according to specifications given by the investigator. Unlike other matching methods for multilevel data, this implementation of our matching strategy is optimal in the sense that it maximizes the size of the matched sample (or minimizes the covariate distances between matched units) and it does not require estimating the propensity score (because it directly balances covariates as specified by the investigator). As we outlined, these specifications for covariate balance are not restricted to mean balance, but extend to other forms of distributional balance such as fine balance, $x$-fine balance, and strength-$k$ balance. This multilevel matching strategy also facilitates sensitivity analyses to hidden biases due to unobserved covariates, and it readily extends to clustered observational studies with three or more levels of data.
 
The proposed multilevel matching strategy is optimal when the matching criterion is a sum of components that correspond to matched units within matched clusters, such as the total number of matched units (as in our case study), or, for example, the total sum of covariate distances between matched units.
To our knowledge, this is the first application of dynamic and integer programming ideas to observational studies.


\onehalfspacing
\bibliography{mybibliography,keele_revised2,mybibliography2}
\bibliographystyle{asa}


\end{document}